\documentclass[11pt]{article}
\usepackage{cite}
\usepackage{graphicx}
\usepackage{url}
\usepackage{amsmath}
\usepackage{amsthm}
\interdisplaylinepenalty=2500 \hyphenation{op-tical
net-works semi-conduc-tor}

\newtheorem{thm}{Theorem}[section]
\newtheorem{lem}[thm]{Lemma}
\newtheorem{cor}[thm]{Corollary}

\begin{document}

\title{Notes on Reed-Muller codes}

\author{{\normalsize Yanling Chen} \\
        {\small Q2S, Centre of Excellence}\\
        {\small Norwegian University of Science and Technology}\\
        {\small Trondheim, Norway}\\ 
        \quad\\    
        {\normalsize Han Vinck} \\
        {\small Institute for Experimental Mathematics}\\
        {\small University of Duisburg-Essen}\\
        {\small Essen 45326, Germany}\\       
        {\small Email: \{julia, vinck\}@iem.uni-due.de}}

\date{}
\maketitle

\begin{abstract} In this paper, we consider the Reed-Muller (RM) codes.
For the first order RM code, we prove that it is unique in the sense
that any linear code with the same length, dimension and minimum
distance must be the first order RM code; For the second order RM code,
we give a constructive {\it linear} sub-code family for the case
when m is even.
This is an extension of Corollary 17 of Ch. 15 in the coding book by
MacWilliams and Sloane. Furthermore, we show that the specified
sub-codes of length $\leq 256$ have minimum distance equal to the
upper bound or the best known lower bound for all linear codes of
the same length and dimension.
As another interesting result, we derive an additive commutative
group of the symplectic matrices with full rank.
\end{abstract}


\section{Introduction}
Let $C$ be an $[n,k,d_{min}]$ binary linear code of length $n$,
dimension $k$ and minimum distance $d_{min}.$ Let $V=\{0,1\}$ and
let $v=(v_1,\cdots, v_m)$ range over $V^m,$ the set of all binary
$m$-tuples. Any function $f(v)=f(v_1,\cdots,v_m)$ which takes on
the values 0 and 1 is called a {\it Boolean function}. Reed-Muller
(or $\mathrm{RM}$) codes can be defined very simply in terms of
Boolean functions. As stated in \cite{6}, the $r$th order binary
$\mathrm{RM}$ code $\mathcal{R}(r,m)$ of length $n=2^m,$ for
$0\leq r\leq m,$ is the set of all vectors $f,$ where
$f(v_1,\cdots,v_m)$ is a Boolean function of degree at most $r.$
In this paper, we consider the RM codes $\mathcal{R}(r,m)$ for
$r=1,2.$ The standard references are Ch. 14 and Ch. 15 in
\cite{6}. Necessary preliminaries are introduced at the beginning
of Section \ref{sec: 1st order RM} and Section \ref{sec: 2nd order
RM}, respectively.

The first order $\mathrm{RM}$ code is a linear $[2^m, 1+m,
2^{m-1}]$ code. It is optimal in the sense that it reaches the
Plotkin bound (refer to Theorem 8 of Ch. 2 in \cite{6}). Besides,
it has a simple weight distribution with one all-zero codeword,
one all-one codeword and $2(2^{m}-1)$ codewords of weight
$2^{m-1}.$ In Section \ref{sec: 1st order RM}, we show that it has
another interesting property: uniqueness, in the sense that, any linear code
with the same length, dimension and minimum distance must be the
first order $\mathrm{RM}$ code.

For the second order $\mathrm{RM}$ code, we recall that there is a
linear sub-code family  for odd $m$ as given in Theorem \ref{thm:
m odd sub-code family} (also Corollary 17 of Ch. 15 in \cite{6}).
Note that the sub-code $\mathcal{R}_{1,2t+1}^{t}$ (or
$\mathcal{R}_{2,2t+1}^{t}$) is optimal in the sense that no codes
exist with the same length and weight set, but with larger
dimension (refer to Proposition 14 in \cite{7}). Correspondingly
for even $m,$ a nonlinear sub-code family (the generalized Kerdock
code $\mathcal{DG}(m,d)$) is introduced in Theorem 19. of Ch. 15
in \cite{6}. In particular, when $d=m/2,$ $\mathcal{DG}(m,d)$ is
one of the best-known nonlinear codes: the Kerdock code
$\mathcal{K}(m).$

As shown in \cite{6}, these nonlinear sub-codes turn out to have
good parameters. However, a linear sub-code family for even $m$ is
still of our interest due to the advantages of linear codes in the
straightforward decoding and practical implementations. In Section
\ref{sec: 2nd order RM}, we reconsider the case for even $m$ and
build up a linear sub-code family as given in Theorem \ref{thm: m
even sub-code family}. At the end, we conclude in Section
\ref{sec: conclusion}.

\begin{thm} \label{thm: m odd sub-code family}
See Corollary 17 of Ch. 15 in \cite{6}. Let $m=2t+1$ be odd, and
let $d$ be any number in the range $1\leq d\leq t.$ Then there
exist two
    {
    \begin{equation*}
        [2^m, m(t-d+2)+1, 2^{m-1}-2^{m-d-1}]
    \end{equation*}
    }
sub-codes $\mathcal{R}_{1, 2t+1}^{d}$ and $\mathcal{R}_{2,
2t+1}^{d}$ of $\mathcal{R}(2, m).$ These are obtained
respectively, by extending the cyclic sub-codes of
$\mathcal{R}(2,m)^*$ having idempotents
    {
    \begin{equation*}
        \theta_0+\theta_1^*+\sum_{j=d}^{t}\theta_{l_j}^* \quad
        \mbox{and} \quad
        \theta_0+\theta_1^*+\sum_{j=1}^{t-d+1}\theta_{l_j}^*,
        \quad l_j=1+2^j.
    \end{equation*}
    }
These codes have weights $2^{m-1}$ and $2^{m-1}\pm 2^{m-h-1}$ for
all $h$ in the range $d\leq h\leq t.$
\end{thm}

\section{The first order Reed-Muller code} \label{sec: 1st order RM}
The first order Reed-Muller code $\mathcal{R}(1,m)$ consists of
all vectors $u_0 \mathbf{1}+\sum_{i=1}^{m} u_i \mathbf{v_i},$
$u_i=0 \mbox{ or } 1,$ corresponding to the linear Boolean functions.
Define the {\it orthogonal code} $\mathcal{O}_m$ to be the
$[2^m,m,2^{m-1}]$ code consisting of the vectors $\sum_{i=1}^{m}
u_i \mathbf{v_i}.$  Then
                        {
                        \begin{equation*}
                           \mathcal{R}(1,m)=\mathcal{O}_m\cup(\mathbf{1}+\mathcal{O}_m),
                        \end{equation*}
                        }
is a $[2^m,1+m,2^{m-1}]$ code.

\begin{thm}\label{thm: unique 1rm and simplex}
(Uniqueness of the first order RM code) Any linear code with
parameters $[2^m, 1+m, 2^{m-1}]$ is equivalent to the first order
Reed-Muller code. They have a unique weight distribution
$1+2(2^m-1)t^{2^{m-1}}+t^{2^m}.$
\end{thm}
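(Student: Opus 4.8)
The plan is to determine the full weight distribution first, by a two–moment (power–moment) argument, and then to read off the equivalence from the structure this forces on a generator matrix $G$ of $C$. Throughout, $A_w$ denotes the number of weight-$w$ codewords and $k=1+m$.

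\textbf{Step 1 (no zero coordinate).} First I would rule out that some coordinate is identically zero on $C$. If one were, deleting it would leave an $[2^m-1,\,1+m,\,2^{m-1}]$ code, since removing an all-zero coordinate changes no weight. This violates the Griesmer bound $n\ge\sum_{i=0}^{m}\lceil 2^{m-1}/2^{i}\rceil=2^m$ (equivalently, the Plotkin bound quoted in the introduction). Hence every coordinate carries a $1$ in exactly $2^{k-1}=2^m$ of the $2^{m+1}$ codewords, so the first power moment is $\sum_w w\,A_w=2^m\cdot 2^m=2^{2m}$.

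\textbf{Step 2 (the weight distribution -- the crux).} Writing each nonzero weight as $w=2^{m-1}+e$ with excess $e\ge 0$ (since $d_{\min}=2^{m-1}$) and $e\le 2^{m-1}$ (since $w\le n=2^m$), the first moment rearranges to $\sum_{e>0} e\,A_w=2^{m-1}$. For a second relation I would double-count, over unordered pairs $\{i,j\}$ of coordinates, the codewords that are $1$ in both positions: a pair of columns of $G$ contributes $2^{k-1}$ if the columns are equal and $2^{k-2}$ otherwise, giving $\sum_w\binom{w}{2}A_w=\binom{n}{2}2^{m-1}+A_2^{\perp}2^{m-1}$, where $A_2^{\perp}\ge 0$ counts the pairs of equal columns (the weight-$2$ words of $C^{\perp}$). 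Converting to a sum over excesses yields $\sum_{e>0}e^2A_w=2^{2m-2}+2^{m}A_2^{\perp}\ge 2^{2m-2}$. On the other hand $\sum_{e>0}e^2A_w\le e_{\max}\sum_{e>0}eA_w\le 2^{m-1}\cdot 2^{m-1}=2^{2m-2}$. The two bounds coincide, forcing both inequalities to be equalities: $A_2^{\perp}=0$ (all columns distinct) and every positive excess equals $2^{m-1}$. Thus every nonzero codeword has weight $2^{m-1}$ or $2^{m}$, and $\sum_{e>0}eA_w=2^{m-1}$ forces exactly one codeword of weight $2^m$. That codeword must be $\mathbf 1$, so $\mathbf 1\in C$ and the weight enumerator is $1+(2^{m+1}-2)\,t^{2^{m-1}}+t^{2^m}=1+2(2^m-1)\,t^{2^{m-1}}+t^{2^m}$, as claimed.

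\textbf{Step 3 (equivalence).} Interpreting $G$ columnwise, $C$ is determined up to equivalence by the multiset of its $2^m$ columns in $\mathbb F_2^{1+m}$, modulo coordinate permutation and modulo the change of basis in $\mathrm{GL}(1+m,\mathbb F_2)$ coming from row operations. By Step 2 these columns are $2^m$ distinct nonzero vectors, and the unique weight-$2^m$ codeword corresponds to a functional $a^{*}$ with $\langle a^{*},c\rangle=1$ for every column $c$. Hence all columns lie in the affine hyperplane $\{x:\langle a^{*},x\rangle=1\}$, which has exactly $2^m$ points; being distinct, the columns fill it completely. A change of basis sending $a^{*}$ to $(1,0,\dots,0)$ turns the column set into $\{x:x_1=1\}$, which is precisely the column set of the generator matrix $\bigl[\mathbf 1;\,\mathbf v_1;\dots;\mathbf v_m\bigr]$ of $\mathcal R(1,m)$ up to column order. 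Therefore $C$ is equivalent to $\mathcal R(1,m)$.

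I expect Step 2 to be the main obstacle: the first moment alone cannot separate, say, many codewords of weight $2^{m-1}+1$ from a single codeword of weight $2^m$, so the decisive idea is to play the second power moment (carrying the nonnegative term $A_2^{\perp}$) against the elementary ceiling $e\le 2^{m-1}$ and squeeze both inequalities to equality in a single stroke. Once the weight distribution and the distinctness of columns are secured, Steps 1 and 3 are routine.
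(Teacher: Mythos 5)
Your proof is correct, but it takes a genuinely different route from the paper's. The paper proceeds structurally: it punctures $C$ to a $[2^m-1,m,2^{m-1}]$ code, identifies that code as the Simplex code by showing its dual is perfect single-error-correcting and invoking the uniqueness of the Hamming code, and then pins down the remaining coset $f+\mathcal{O}_m$ via the Hadamard transform of $(-1)^{f}$ and Parseval's identity, concluding that the coset must contain $\mathbf{1}$. You instead determine the weight distribution directly from the first two power moments: the first moment (no zero columns, by Griesmer/Plotkin) gives $\sum_{e>0}eA_w=2^{m-1}$ for the excesses $e=w-2^{m-1}$, and the pair count gives $\sum_{e>0}e^2A_w=2^{2m-2}+2^mA_2^{\perp}$, which squeezed against the ceiling $e\le 2^{m-1}$ forces $A_2^{\perp}=0$, all nonzero weights in $\{2^{m-1},2^m\}$, and exactly one word of weight $2^m$; the equivalence then falls out because the $2^m$ distinct nonzero columns must exactly fill the affine hyperplane $\{x:\langle a^*,x\rangle=1\}$, which a change of basis carries to the column set of the standard $\mathcal{R}(1,m)$ generator matrix. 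I checked the moment computations and the equality analysis; they are right. Your argument is more elementary and self-contained --- it avoids both the nonexistence/uniqueness theorem for perfect codes (the paper's reference \cite{12}) and the Fourier-analytic machinery of Ch.~14 of \cite{6} --- whereas the paper's detour buys, as a by-product, the uniqueness of the $[2^m-1,m,2^{m-1}]$ Simplex code, which your argument does not directly establish (though it would follow by a similar moment computation). One small presentational point: in Step 2 it is worth stating explicitly that equality in $\sum_{e>0}e^2A_w\le e_{\max}\sum_{e>0}eA_w$ forces \emph{every} occurring positive excess to equal $2^{m-1}$, since that is the step that kills the scenario of many words of weight slightly above $2^{m-1}$; you do say this is the crux, and the reasoning as written supports it.
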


\begin{proof} Let $C$ be any linear $[2^m, 1+m, 2^{m-1}]$ code. Let $G$ be its
generator matrix of systematic form. We show in the following 4
steps that $C$ must be $\mathcal{R}(1,m).$

            {\it 1st step: Construct code $C_1$ from $C$.}

            Deleting the first row and the first column of $G,$
            we get a new generator matrix $G_1$. Let $C_1$ be the code
            generated by $G_1.$ Then $C_1$ is a linear $[2^m-1, m, d_1]$
            code, where $d_1 \geq 2^{m-1}.$

            Let $n_1=2^m-1, k_1=m$ and assume that $d_1\geq 2^{m-1}+1.$ According to the Plotkin
            bound, the number of the codewords of $C_1$ must satisfy the following inequality
                {
                \begin{equation*}
                    |C_1|\leq 2[\frac{d_1}{2d_1-n_1}]\leq 2[d/3]<d\leq n_1=2^m-1<2^m.
                \end{equation*}
                }
             This contradicts the fact $|C_1|=2^{k_1}=2^{m}.$ So $d_1=2^{m-1}.$

            {\it 2nd step: $C_1$ is the Simplex code.}

            We consider the dual code of $C_1,$ $C_1^{\bot},$ a
            linear  $[2^m-1, 2^m-m-1, d_1^{\bot}]$ code, where $d_1^{\bot}\geq 1.$

            Suppose $d_1^{\bot}=1.$ Then there is a codeword $\mathbf{a}$ of weight 1 in
            $C_1^{\bot}$ with only one 1-entry, whose
            coordinate is assumed to be $i.$ Since $\mathbf{a}\in C_1^{
            \bot},$ correspondingly the $i$-th position of all
            codewords in $C_1$ must be 0. Deleting the $i$-th
            coordinate of code $C_1,$ we get a linear $[2^m-2, m, 2^{m-1}]$
            code, which is impossible due to the Plotkin bound.

            Suppose $d_1^{\bot}=2.$ Then there is a codeword $\mathbf{b}$ of weight 2 with only two
            1-entries, whose coordinates are assumed to be $i$ and $j.$
            Since $\mathbf{b}\in C_1^{
            \bot},$ correspondingly the $i$-th position and the $j$-th position of all
            codewords in $C_1$ must have the same entries. Note that it is impossible that
            the $i$-th and $j$-th positions of all codewords have
            only 1-entry. Suppose that they are all 0-entry. Then by deleting the
            $i$-th and $j$-th coordinates of code $C_1,$ we can get
            a linear $[2^m-3, m, 2^{m-1}]$ code.
            However, such a linear code is not existent due to the Plotkin bound.
            Now we consider the case when the $i$-th and $j$-th positions of all
            codewords in $C_1$ have not only 0-entry but also
            1-entry. We take the codewords with 0-entry at
            positions $i$ and $j$ and then delete both
            coordinates. The derived codewords build up a linear $[2^m-3, m-1,
            2^{m-1}]$ code, which is also not existent due to the
            Plotkin bound.

            As a conclusion of above discussion, we have $d_1^{\bot}\geq 3.$ It is easy to verify
            that when $e=\lfloor \frac{d_1^{\bot}-1}{2}\rfloor=1,$ $C_1^{\bot}$
            reaches the Hamming bound (refer to Theorem 6 of Ch. 1 in \cite{6})
             and thus $C_1^{\bot}$ is a perfect single-error-correcting code.
            Since the binary linear perfect single-error-correcting code with the same length and dimension is unique
            \cite{12}, so $d_1^{\bot}=3$ and $C_1^{\bot}$ must be the binary Hamming code.
            Therefore, $C_1$ is the Simplex code and has a
            unique weight enumerator $1+(2^m-1)t^{2^{m-1}}.$
            So far we have proved that any binary linear with parameters $[2^m-1, m, 2^{m-1}]$
            must be the Simplex code.

            {\it 3rd step: Construct code $C$ back from code $C_1.$}

            Recall that $G$ is a generator matrix of $C$ in a
            systematic form. We get $G_1,$ the generator matric of
            $C_1,$ by deleting the first row and the first column
            of $G.$ Now in order to get back $C$ from $C_1,$ first we
            add an all-zero column to $G_1$ and denote the extended
            matrix as $G_2.$ Then we add a row to $G_2$ so as to get a matrix equivalent to
            $G$ and thus code $C.$

            Since $C_1$ is the Simplex code, so the columns of
            $G_1$ are the binary representations of the numbers from 1 to
            $2^m-1.$ Note that $G_2$ is constructed by adding an
            all-0 column into $G_1.$ So the columns of $G_2$ go
            through all binary representations of the numbers from 0
            to $2^m-1.$ Clearly $C_2$ is equivalent to the orthogonal code
            $\mathcal{O}_m.$ Without loss of generality, we use the Boolean
            functions ${v}_1, \cdots, {v}_m$ to denote the basis rows of $G_2.$
            Then the basis codeword $\mathbf{c}$ added into $G_2$ can be
            specified as a Boolean function $f({v})=f({v}_1, \cdots,
            {v}_m).$ Clearly $C=C_2\cup(\mathbf{c}+C_2)=\mathcal{O}_m\cup(f+\mathcal{O}_m).$

            {\it 4th step: Code $C$ is the first order Reed-Muller code.}

            For any vector $u=(u_1, \cdots, u_m)\in
            V^m,$ $f(u)$ denote the value of the Boolean function $f$ at $u.$
            Let $F(u)=(-1)^{f(u)}.$ The Hadamard transform of $F$ (refer to p. 414 of Ch. 14 in \cite{6}) is
            given by
                    {
                    \begin{eqnarray*}
                        \hat{F}(u)&=&\sum_{v\in V^m} (-1)^{u\cdot v}
                        F(v), \quad u\in V^m, \\
                        &=& \sum_{v\in V^m} (-1)^{u\cdot v+f(v)}.
                    \end{eqnarray*}
                    }
            $\hat{F}(u)$ is equal to the number of 0's minus the
            number of 1's in the binary vector  $f+\sum_{i=1}^m u_i {v}_i.$
            Thus
                    {
                        \begin{equation*}
                            \hat{F}(u)=2^m-2\mbox{dist}\{f, \sum_{i=1}^m u_i {v}_i\}.
                        \end{equation*}
                    }
            Here $\mbox{dist}\{\mathbf{a}, \mathbf{b}\}$ is the
            Hamming distance of two binary vectors $\mathbf{a}$ and $\mathbf{b}.$
            Note that $f+\sum_{i=1}^m u_i {v}_i\in C$ and
            code $C$ has minimum distance $2^{m-1},$ i.e.,
                    {
                        \begin{equation*}
                           \mbox{dist}\{f, \sum_{i=1}^m u_i {v}_i\}\geq 2^{m-1}.
                        \end{equation*}
                    }
            Therefore,
                    {
                        \begin{equation*}
                             \hat{F}(u)\leq 0,\ \mbox{for any}\ u\in V^m.
                        \end{equation*}
                    }
            According to Lemma 2 of Ch. 14 in \cite{6}, for any $v\neq
            0,$ $v\in V^m,$
                    {
                        \begin{equation*}
                                \sum_{u\in V^{m}} \hat{F}(u)\hat{F}(u+v)=0.
                        \end{equation*}
                        }
            Let $W$ be the set such that
            $\hat{F}(u)=0$ for $u\in W.$ Then for $u\in V^m\setminus W,$ we
            have $\hat{F}(u)<0.$ For any $v\neq 0,$ $v\in
            V^m,$
                    {
                        \begin{align*}
                                \sum_{u\in V^{m}} \hat{F}(u)\hat{F}(u+v)
                                    =&\sum_{u\in \{W\cup(-v+W)\}} \hat{F}(u)\hat{F}(u+v)
                                    + \sum_{u\in V^{m}\setminus\{W\cup(-v+W)\}} \hat{F}(u)\hat{F}(u+v)\\
                                    =&\sum_{u\in V^{m}\setminus\{W\cup(-v+W)\}} \hat{F}(u)\hat{F}(u+v)\\
                                    =&0,
                        \end{align*}
                    }
            where $-v+W=\{u-v| u\in W\}.$ Clearly, we have
                    {
                        \begin{equation*}
                              V^{m}=W\cup(-v+W), \ \mbox{for any}\ v\in V^m\setminus\{0\}.
                        \end{equation*}
                    }

            In the following we will prove that $|W|=2^m-1.$
            First we show that $|W|\neq 2^m,$ i.e., $W\neq
            V^m.$ Suppose that $W=V^m,$ then we have
                    {
                        \begin{equation*}
                            \sum_{u\in V^{m}} \hat{F}(u)^2=\sum_{u\in W}
                            \hat{F}(u)^2=0.
                        \end{equation*}
                        }
            This is a  contradiction to the Parseval's equation (refer to Corollary 3 of Ch. 14 in \cite{6})
                    {
                        \begin{equation*}
                            \sum_{u\in V^{m}} \hat{F}(u)^2=2^{2m}.
                        \end{equation*}
                    }

            Now we prove that $|W|>2^m-2.$ Suppose that $|W|\leq
            2^m-2.$ Then there are at least two elements $i, j\in V^m,$ $i\neq j$ and
             $i,j\notin W.$  Note that if we choose $v=j-i\neq 0,$ then  we have $i\notin -v+W$ since
            $j\notin W.$
            Thus $i\in V^m$ but $i\notin \{W\cup (-v+W)\}$ for $v=j-i\neq 0.$ This contradicts
            to the fact that
                    {
                        \begin{equation*}
                                V^{m}=W\cup(-v+W), \ \mbox{for any}\ v\in V^m\setminus\{0\}.
                        \end{equation*}
                    }
            As a result, we can conclude that $|W|=2^m-1.$ Assume
            that $\tilde{u}\notin W.$ Recall the Parseval's equation.
            We have
                    {
                        \begin{equation*}
                            \sum_{u\in V^{m}} \hat{F}(u)^2=\hat{F}(\widetilde{u})^2=2^{2m}.
                        \end{equation*}
                    }
            Since $\hat{F}(\widetilde{u})<0,$ it is clear that $\hat{F}(\widetilde{u})=-2^m.$
            Note that $\hat{F}(u)=2^m-2\mbox{dist}\{f, \sum_{i=1}^m u_i {v}_i\}.$
            So
                    {
                        \begin{equation*}
                            \mbox{dist}\{f, \sum_{i=1}^m u_i
                            {v}_i\}=\frac{1}{2}\{2^m-\hat{F}(u)\}
                            =\left\{
                                    \begin{tabular}{ll}
                                        $2^{m-1}$ & $u\in W,$ \\
                                        $2^m$     & $u=\widetilde{u}.$
                                    \end{tabular}
                            \right.
                        \end{equation*}
                    }
            In other words, the coset $\mathbf{c}+C_2$ or
            $f+\mathcal{O}_m$ has one codeword of weight $2^m$ and
            $2^m-1$ codewords of weight $2^{m-1}.$ Thus, we have
            proved that $\mathbf{1}\in
            \mathbf{c}+C_2.$ So
                    {
                        \begin{equation*}
                             C=C_2\cup(\mathbf{1}+C_2)=\mathcal{O}_m\cup(\mathbf{1}+\mathcal{O}_m),
                        \end{equation*}
                        }
            is the first order Reed-Muller code and has the unique weight
            enumerator $1+(2^{m+1}-2)t^{2^{m-1}}+t^{2^m}.$
\end{proof}

\section{The second order Reed-Muller code}\label{sec: 2nd order RM}

The second order binary $\mathrm{RM}$ code $\mathcal{R}(2,m)$ is the
set of all vectors $f,$ where $f(v_1,\cdots,v_m)$ is a Boolean
function of degree $\leq 2.$ $\mathcal{R}(2,m)$ is of length
$2^m,$ of dimension $1+m+\binom{m}{2}$ and of minimum distance
$2^{m-2}.$ A typical codeword of $\mathcal{R}(2,m)$ is given by
the Boolean function
    {
    \begin{eqnarray*}
        S(v)&=&\sum_{1\leq i\leq j\leq m}q_{ij} v_i v_j +\sum_{1\leq i\leq m}u_i v_i +\epsilon\\
            &=&vQv^{T}+Lv^{T}+\epsilon,
    \end{eqnarray*}
    }
where $v=(v_1,\cdots,v_m),$
$Q=(q_{ij})$ is an upper triangular binary matrix,
$L=(u_1,\cdots,u_m)$ is a binary vector and $\epsilon$ is 0 or 1.
Note that, if $Q$ is fixed and the linear function
$Lv^{T}+\epsilon$ varies over the first-order RM code
$\mathcal{R}(1,m),$ then $S(v)$ runs through a coset of
$\mathcal{R}(1,m)$ in $\mathcal{R}(2,m).$ This coset is
characterized by $Q,$ or alternatively by the symmetric matrix
$B=Q+Q^{T}.$ $B$ is a binary symmetric matrix with zero diagonal,
which is called {\it symplectic matrix}. The {\it symplectic form}
associated with $B,$ as defined by (4) of Ch. 15 in \cite{6}, is
    {
    \begin{eqnarray*}
        \mathcal{B}(u,v)&=&u(Q+Q^{\mathrm{T}})v^{\mathrm{T}}\\
                        &=& S(u+v)+S(u)+S(v)+\epsilon.
    \end{eqnarray*}
    }
As stated in Lemma \ref{lem: rank dis}, the weight distribution of
the coset associated with $\mathcal{B}$ depends only on the rank
of the matrix $B.$
\begin{lem}\label{lem: rank dis}
See Theorem 5 of Ch. 15 in \cite{6}. If the matrix $B$ has rank
$2h,$ the weight distribution of the corresponding coset of
$\mathcal{R}(1,m)$ in $\mathcal{R}(2,m)$ is as follows:
    {
        \begin{equation*}
            \begin{tabular}{cc}
                \hline
                \mbox{Weight}       & \mbox{Number of Vectors}\\
                \hline
                $2^{m-1}-2^{m-h-1}$ & $2^{2h}$\\
                $2^{m-1}$           & $2^{m+1}-2^{2h+1}$\\
                $2^{m-1}+2^{m-h-1}$ & $2^{2h}$\\
                \hline
            \end{tabular}
        \end{equation*}
        }
\end{lem}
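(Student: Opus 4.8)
The plan is to reduce the arbitrary rank-$2h$ symplectic matrix $B$ to a canonical block form and then read off every codeword weight from a single character sum. First I would note that the problem is invariant under the substitution $v\mapsto vA$ for any nonsingular binary $m\times m$ matrix $A$. Such a substitution permutes the $2^m$ evaluation points $V^m$, hence preserves the Hamming weight of every Boolean function; it carries $vQv^{\mathrm{T}}$ to $v(AQA^{\mathrm{T}})v^{\mathrm{T}}$, so the associated symplectic matrix transforms as $B\mapsto ABA^{\mathrm{T}}$, and it maps $\mathcal{R}(1,m)$ onto itself. Consequently the whole coset attached to $B$ is mapped, weight-preservingly and bijectively, onto the coset attached to $ABA^{\mathrm{T}}$, so it suffices to compute the weight distribution for one representative in each congruence class.

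Next I would use the classification of alternating forms over the binary field: every symmetric zero-diagonal matrix of rank $2h$ is congruent to the block-diagonal matrix with $h$ copies of $\bigl(\begin{smallmatrix}0&1\\1&0\end{smallmatrix}\bigr)$ and zeros elsewhere, whose quadratic form is $Q_0(v)=\sum_{i=1}^{h}v_{2i-1}v_{2i}$. The coset then consists of the $2^{m+1}$ functions $S(v)=Q_0(v)+\sum_{j=1}^{m}u_j v_j+\epsilon$ as $(u_1,\dots,u_m)$ and $\epsilon$ range freely. Writing the weight as $\mathrm{wt}(S)=2^{m-1}-\tfrac12\hat{S}$ with $\hat{S}=\sum_{v\in V^m}(-1)^{S(v)}$, the sum factors over the $h$ independent coordinate pairs and the $m-2h$ remaining coordinates. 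Each free coordinate contributes $\sum_{v_j}(-1)^{u_j v_j}$, equal to $2$ if $u_j=0$ and $0$ otherwise, while each pair contributes the elementary Gauss sum $\sum_{x,y}(-1)^{xy+u_{2i-1}x+u_{2i}y}=2(-1)^{u_{2i-1}u_{2i}}$. Hence $\hat{S}=0$ unless $u_{2h+1}=\cdots=u_m=0$, in which case $\hat{S}=(-1)^{\epsilon+\sum_{i}u_{2i-1}u_{2i}}\,2^{m-h}$.

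It then remains to count. The codewords with $\hat{S}\neq0$ are precisely those with $u_{2h+1}=\cdots=u_m=0$; there are $2^{2h}$ choices for $(u_1,\dots,u_{2h})$, and for each such choice exactly one of the two values of $\epsilon$ makes the exponent $\epsilon+\sum_i u_{2i-1}u_{2i}$ even and one makes it odd. This gives $2^{2h}$ codewords with $\hat{S}=2^{m-h}$ (weight $2^{m-1}-2^{m-h-1}$) and $2^{2h}$ with $\hat{S}=-2^{m-h}$ (weight $2^{m-1}+2^{m-h-1}$); the remaining $2^{m+1}-2^{2h+1}$ codewords have $\hat{S}=0$, i.e.\ weight $2^{m-1}$, which reproduces the table.

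The main obstacle is the reduction step rather than the arithmetic: one must establish that all rank-$2h$ symmetric zero-diagonal binary matrices are congruent and, more delicately, verify that the substitution $v\mapsto vA$ sends the \emph{entire} coset to the \emph{entire} coset. The point is that the diagonal of $Q$ and any linear terms produced by the substitution are absorbed harmlessly into $\mathcal{R}(1,m)$, so that the weight distribution genuinely depends on $B=Q+Q^{\mathrm{T}}$ alone; once this invariance is secured, the factorized character sum above finishes the proof.
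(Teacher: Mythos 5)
Your proof is correct. Note that the paper does not prove this lemma at all---it is imported verbatim as Theorem 5 of Ch.~15 of \cite{6}---and your argument (reduce $B$ to the Dickson canonical form $\sum_{i=1}^{h}v_{2i-1}v_{2i}$ via a congruence $B\mapsto ABA^{\mathrm{T}}$, absorb the diagonal and linear debris into $\mathcal{R}(1,m)$, then evaluate the factorized character sum) is essentially the same route the reference takes, so there is nothing to reconcile. The two points you flag as delicate, namely that the coset depends only on $B=Q+Q^{\mathrm{T}}$ and that all rank-$2h$ alternating forms over $GF(2)$ are congruent, are exactly the right ones, and both are handled correctly; the weight counts $2^{2h}$, $2^{m+1}-2^{2h+1}$, $2^{2h}$ follow as you state.
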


Since rank of $B$ satisfies $2h\leq m$, the coset with largest possible minimum weight occurs when $2h=m.$ In this
case $m$ must be even. Furthermore, the Boolean functions
associated with such cosets are quadratic {\it bent functions}, in
the sense that they are furthest away from the linear Boolean
functions (refer to Theorem 6 of Ch. 14 in \cite{6}).

Note that the binary RM code is also conveniently defined as an
extension of a cyclic code. In this section, we consider the
punctured second order RM code $\mathcal{R}(2,m)^*.$ As shown in
Theorem \ref{thm: m even sub-code family}, we obtain a sub-code
family of $\mathcal{R}(2,m)$ for even $m$ by extending a family of
sub-codes of $\mathcal{R}(2,m)^*.$

Let $F=GF(2)$ and $F[x]$ be the set of polynomials in $x$ with
coefficients from $F.$ Define the ring $R_n=F[x]/(x^n-1),$ which
consists of the residue classes of $F[x]$ modulo $x^n-1.$ A cyclic
code of length $n$ is an ideal of $F[x]/(x^n-1).$ Let $GF(2^m)$ be
the splitting field of $x^n-1$ over $F$ ($m$ is the smallest
positive integer such that $n$ divides $2^m-1$). Let $\alpha\in
GF(2^m)$ be a primitive $n$-th root of unity. If
$c(x)=\sum_{i=0}^{n-1} c_i x^i$ is a polynomial of $R_n,$ then the
Mattson-Solomon polynomial of $c(x)$ is a polynomial in $F[z]$
defined by
    {
    \begin{equation*}
        A(z)=\sum_{j=1}^{n}c(\alpha^j)z^{n-j}.
    \end{equation*}
    }
It has the following property.
    \begin{thm} \label{thm: Mattson-Solomon polynomial coe} See Theorem 1.2 in
    \cite{13}. $c_i=A(\alpha^i).$
    \end{thm}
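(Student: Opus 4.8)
The plan is to verify $c_i = A(\alpha^i)$ by direct substitution, reducing the whole identity to the orthogonality relation satisfied by the powers of $\alpha$. First I would write out $A(\alpha^i)$ from the definition of the Mattson-Solomon polynomial, obtaining $A(\alpha^i) = \sum_{j=1}^{n} c(\alpha^j)\alpha^{i(n-j)}$, and then expand each evaluation as $c(\alpha^j) = \sum_{k=0}^{n-1} c_k \alpha^{jk}$. Substituting and interchanging the two finite sums gives
    {
    \begin{equation*}
        A(\alpha^i) = \sum_{k=0}^{n-1} c_k \sum_{j=1}^{n} \alpha^{jk + i(n-j)}.
    \end{equation*}
    }

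The next step is to simplify the exponent. Since $\alpha$ is an $n$-th root of unity, $\alpha^{in} = (\alpha^n)^i = 1$, so the inner sum collapses to $\sum_{j=1}^{n} \alpha^{j(k-i)}$, a geometric sum in $\beta = \alpha^{k-i}$. This is where the essential computation lies, and I would evaluate it by the standard orthogonality argument. When $k \neq i$ we have $0 < |k-i| < n$, so $\beta \neq 1$ while $\beta^n = 1$; hence the geometric series $\sum_{j=1}^{n} \beta^j = \beta(\beta^n - 1)/(\beta - 1)$ has a vanishing numerator and equals $0$. When $k = i$ every term equals $1$, so the sum equals $n \cdot 1$ in $GF(2^m)$.

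The main (and essentially only) delicate point is the evaluation of the diagonal term $k = i$. Because $\alpha$ is a \emph{primitive} $n$-th root of unity living in a field of characteristic $2$, the polynomial $x^n - 1$ must have distinct roots, which forces $\gcd(n,2) = 1$; thus $n$ is odd and $n \cdot 1 = 1$ in $GF(2^m)$. Consequently the inner sum equals $1$ exactly when $k = i$ and $0$ otherwise, so the outer sum picks out the single surviving term $c_i$. This yields $A(\alpha^i) = c_i$ and completes the proof.
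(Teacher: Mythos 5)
Your proposal is correct and complete. Note that the paper itself gives no proof of this statement --- it simply cites Theorem 1.2 of the Kerdock--MacWilliams--Odlyzko reference --- so there is no internal argument to compare against; your computation is the standard one underlying that cited result. The two steps that matter are both handled properly: the off-diagonal inner sums vanish because $\alpha^{k-i}\neq 1$ while $(\alpha^{k-i})^n=1$, and the diagonal term contributes $n\cdot 1=1$ in $GF(2^m)$ because $n\mid 2^m-1$ forces $n$ odd (which is also exactly why no $1/n$ normalization factor appears in the statement $c_i=A(\alpha^i)$). Your justification of $n$ odd via separability of $x^n-1$ is slightly roundabout --- the paper's standing assumption that $n$ divides $2^m-1$ gives it immediately --- but it is not wrong.
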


We partition the integers $\mathrm{mod}\ n$ into sets called {\it
cyclotomic cosets} $\mathrm{mod}\ n.$ The cyclotomic coset containing $s$
is
    {
    \begin{equation*}
        C_s=\{s,2s, 2^2s, 2^3s,\cdots, 2^{m_s-1}s\},
    \end{equation*}
    }
where $m_s$ is the smallest positive integer such that
$2^{m_s}\cdot s\equiv s(\mathrm{mod}\ n).$ Clearly, the numbers $-s,$ $
-2s,$ $-2^2s,$ $-2^3s,$ $\cdots,$ $-2^{m_s-1}s$ also form a
cyclotomic coset denoted $C_{-s}.$ Let
      {
    \begin{equation*}
        T_{m_s}(\gamma)=\gamma+\gamma^2+\gamma^{2^2}+\cdots+\gamma^{2^{m_s-1}}.
    \end{equation*}
    }
If $\gamma\in GF(2^{m_s}),$ then it is called the {\it trace} of
$\gamma$ from $GF(2^{m_s})$ to $GF(2).$

It is known that $R_n$ is the direct sum of its minimal ideals and
each minimal ideal is generated by a {\it primitive idempotent}.
A primitive idempotent $\theta_s$ is defined by the property
    {
    \begin{equation*}
       \theta_s(\alpha^j)=\left \{
                                \begin{array}{ll}
                                    1 & \mbox{if}\ j\in C_s,\\
                                    0 & \mbox{otherwise}.
                                \end{array}
                            \right.
    \end{equation*}
    }
Let $\theta_s^*(x)=\theta_{s'}(x),$ where $s'\in C_{-s}.$ The
polynomial $\theta_s^*(x)$ is also a primitive idempotent. Its
Mattson-Solomon polynomial is $A(z)=\sum_{j\in C_s} z^j.$

Let ${\varPsi}$ be a cyclic code of length $n$ over $F$. Then
there is a unique polynomial in ${\varPsi}$ which is both an
idempotent and a generator. For the punctured second order
Reed-Muller code $\mathcal{R}(2,m)^*,$ its idempotent is
    {
    \begin{equation*}
        \theta_0+\theta_1^*+\sum_{i=1}^{[\frac{m}{2}]}\theta_{l_i}^*, \quad
        l_j=1+2^j.
    \end{equation*}
    }

Before proving Theorem \ref{thm: m even sub-code family}, we
introduce the following two lemmata.
\begin{lem} \label{lem: gcd}
    {
    \begin{equation*}
        \gcd(2^m-1, 2^i+1)= \left\{\begin{array}{ll}
                                    1 & \quad \mbox{if } \gcd(m,2i)=\gcd(m,i),\\
                                2^{\gcd(m, i)}+1 & \quad \mbox{if } \gcd(m,2i)=2\gcd(m,i).
                                  \end{array}
                            \right.
    \end{equation*}
    }
\end{lem}

\begin{proof}
    {
    \begin{eqnarray*}
        \gcd(2^m-1, 2^i+1)  &\stackrel{(a)}{=}& \gcd(2^m-1, 2^i+1, 2^{m-i}-2^i) \\
                            &\stackrel{(b)}{=}& \gcd(2^m-1, 2^i+1, 2^{2i}-1) \\
                            &\stackrel{(c)}{=}& \gcd(2^{\gcd(m, 2i)}-1, 2^i+1),
    \end{eqnarray*}
    }
        where (a) is from the fact that
        $2^m-1=(2^{m-i}-1)(2^i+1)-(2^{m-i}-2^i);$ (b) is from the
        fact that $2^m-1=2^i(2^{m-i}-2^i)+(2^{2i}-1);$ (c) is from
        the fact that $\gcd(2^x-1, 2^y-1)=2^{\gcd(x,y)}-1.$

    If $\gcd(m,2i)=\gcd(m,i),$ we have
    {
        \begin{eqnarray*} 
        \gcd(2^m-1, 2^i+1)  &{=}& \gcd(2^{\gcd(m, 2i)}-1, 2^i+1)\\
                            &{=}& \gcd(2^{\gcd(m, i)}-1, 2^i+1)\\
                            &\stackrel{(d)}{\leq}& \gcd(2^{i}-1, 2^i+1)\\
                            &\stackrel{(e)}{=}& 1,
        \end{eqnarray*}
        }
        where (d) is from the fact that $2^{\gcd(m, i)}-1\mid
        2^i-1;$ (e) is from the fact that $\gcd(2^{i}-1,
        2^i+1)=1.$ Clearly we
        have in this case $\gcd(2^m-1, 2^i+1)=1.$

     If $\gcd(m,2i)=2\gcd(m,i),$ we have
    {
        \begin{eqnarray*} 
        \gcd(2^m-1, 2^i+1)  &{=}& \gcd(2^{\gcd(m, 2i)}-1, 2^i+1)\\
                            &{=}& \gcd(2^{2\gcd(m, i)}-1, 2^i+1)\\
                            &\stackrel{(f)}{=}& \gcd(2^{\gcd(m, i)}-1, 2^i+1)
                             \cdot \gcd(2^{\gcd(m, i)}+1, 2^i+1)\\
                            &{=}& \gcd(2^{\gcd(m, i)}+1, 2^i+1),\\
                            &\stackrel{(g)}{=}& 2^{\gcd(m, i)}+1,
        \end{eqnarray*}
        }
        where (f) is from the fact that $2^{2\gcd(m, i)}-1=(2^{\gcd(m, i)}-1)(2^{\gcd(m, i)}+1)$
        and $\gcd(2^{\gcd(m, i)}-1, 2^{\gcd(m, i)}+1)=1.$ (g) is from the fact that $2^{\gcd(m, i)}+1\mid
        2^i+1,$ since $\frac{i}{\gcd(m,i)}$ is odd due to $\gcd(m,2i)=2\gcd(m,i).$
    \end{proof}

\begin{lem}\label{lem: size of cyclotomic cosets}
    If $m=2t+1$ is odd, then for $l_i=1+2^i,$
        {
        $$|C_{l_i}|=m, \quad 1\leq i\leq  t.$$
        }
    If $m=2t+2$ is even, then  for $l_i=1+2^i,$
        {
        \begin{equation*}
            |C_{l_i}|=\left\{
                            \begin{array}{ll}
                                m &  1\leq i\leq  t,\\
                                m/2 & i=t+1.
                            \end{array}
                        \right.
        \end{equation*}
        }
\end{lem}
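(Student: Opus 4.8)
The plan is to compute $|C_{l_i}|=m_{l_i}$ straight from its definition as the least positive integer $k$ with $2^{k}l_i\equiv l_i\pmod{2^m-1}$, reading the answer off the binary expansion of $l_i$. First I would record the reformulation that on the residues $1,\dots,2^m-2$ multiplication by $2$ modulo $2^m-1$ acts as the cyclic left shift of the $m$-bit expansion (since $2^m\equiv 1$), and that two such residues coincide exactly when their $m$-bit strings agree. The residue $l_i=1+2^i$ has weight-two expansion, with ones precisely in positions $0$ and $i\pmod m$, and every shift again has weight two; hence $|C_{l_i}|$ equals the number of distinct cyclic shifts of this two-one necklace, that is, its least period $p$, which divides $m$.

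Next I would pin down $p$. The shifts fixing the position set $\{0,i\}\pmod m$ form a subgroup of the residues modulo $m$, and a shift by $t$ fixes $\{0,i\}$ iff $\{t,\,i+t\}=\{0,i\}\pmod m$; this forces either $t\equiv 0$, or $t\equiv i$ together with $2i\equiv 0\pmod m$. Thus a nontrivial period exists only when $m\mid 2i$, which for $1\le i\le m-1$ means $m$ is even and $i=m/2$, giving $p=m/2$; in every other case the stabilizer is trivial and $p=m$. Translating to the stated ranges: for $m=2t+1$ one has $0<2i\le 2t<m$ when $1\le i\le t$, so $m\nmid 2i$ and $|C_{l_i}|=m$; for $m=2t+2$ the same bound gives $|C_{l_i}|=m$ for $1\le i\le t$, whereas $i=t+1=m/2$ satisfies $2i=m$ and yields $|C_{l_i}|=m/2$.

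An equivalent algebraic route would use Lemma \ref{lem: gcd}: setting $N_i=(2^m-1)/\gcd(2^m-1,2^i+1)$, one has $|C_{l_i}|=\mathrm{ord}_{N_i}(2)$, and Lemma \ref{lem: gcd} shows the gcd equals $1$ exactly when $m/\gcd(m,i)$ is odd (so $N_i=2^m-1$ and the order is $m$) and equals $2^{\gcd(m,i)}+1$ otherwise, with the distinguished case $i=m/2$ giving $N_i=2^{m/2}-1$ and order $m/2$. I expect the real obstacle to live in this algebraic formulation, where one must verify that whenever $\gcd(m,i)<m/2$ the order of $2$ modulo $N_i$ does not drop below $m$; that fact is precisely what the necklace argument renders transparent, since no shift smaller than $m$ fixes $\{0,i\}$ unless $i=m/2$. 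For this reason I would present the period computation as the main proof and keep the gcd identity only as a confirming remark.
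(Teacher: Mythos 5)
Your proof is correct. It is worth noting how it relates to the paper's own argument, which is split into two unrelated halves: for odd $m$ the paper invokes Lemma~\ref{lem: gcd} to conclude $\gcd(l_i,2^m-1)=1$, whence $|C_{l_i}|$ equals the order of $2$ modulo $2^m-1$, namely $m$; for even $m$ it lists the first $t+2$ elements $2^k+2^{i+k}$ of the coset explicitly, observes they are distinct integers below $2^m-1$ (so $|C_{l_i}|>m/2$), and combines this with the fact that $|C_{l_i}|$ divides $m$ to force $|C_{l_i}|=m$, handling $i=t+1$ by direct enumeration. Your necklace formulation is essentially the same underlying observation as the paper's even-$m$ case (cyclic shifts of the weight-two bit string), but you push it through uniformly by computing the stabilizer of the position set $\{0,i\}$ under shifts, which settles all cases at once, dispenses with Lemma~\ref{lem: gcd} entirely, and replaces the ``divisor of $m$ exceeding $m/2$'' counting with the cleaner statement that the period is $m$ unless $2i\equiv 0\pmod m$. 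Your closing remark about the algebraic route is also apt: the gcd identity alone does not immediately yield the coset size when the gcd is nontrivial, which is presumably why the paper switches to the enumeration argument for even $m$. The only caveat, not worth dwelling on, is that the identification of residues with $m$-bit strings requires excluding the all-ones string, which only matters in the degenerate case $m=2$, $i=1$.
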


\begin{proof} If $m=2t+1$ is odd,  for $1\leq i \leq t,$ $l_i$ and $2^m-1$
are relatively prime according to Lemma \ref{lem: gcd}. In this
case, it is clear that $|C_{l_i}|=m.$

If $m=2t+2$ is even, for $1\leq i \leq t,$ $C_{l_i}$ consists of
    {
    $$\{1+2^i,2+2^{i+1}, 2^2+2^{i+2}, \cdots,2^t+2^{i+t}, 2^t+2^{i+t}, 2^{t+1}+2^{i+t+1} \cdots\}.$$}
It is easy to see that $|C_{l_i}|\geq t+2=m/2+1.$ Note that
$|C_{l_i}|$ must be either $m$ or a divisor of $m.$ Therefore,
$|C_{l_i}|=m$ for $1\leq i \leq t.$ Consider the case $i=t+1.$
$C_{l_{t+1}}$ consists of
$$\{1+2^{t+1},2+2^{t+2}, 2^2+2^{t+3}, \cdots,2^{t}+2^{2t+1}\}.$$ It
is easy to count that $|C_{l_{t+1}}|=t+1=m/2.$
\end{proof}

\begin{thm} \label{thm: m even sub-code family} Let $m=2t+2$ be
even, and let $d$ be any number in the range $1\leq d\leq t+1.$
Then there exists a
    {
    \begin{equation*}
        [2^m, m(t-d+2)+{m}/{2}+1, 2^{m-1}-2^{m-d-1}]
    \end{equation*}
    }
sub-code $\mathcal{R}_{2t+2}^{d}$ of $\mathcal{R}(2, m).$ It is
obtained by extending the cyclic sub-code of $\mathcal{R}(2,m)^*$
having idempotent
    {
    \begin{equation*}
        \theta_0+\theta_1^*+\sum_{j=d}^{t+1}\theta_{l_j}^*.
    \end{equation*}
    }
The code has codewords of weights $2^{m-1}$ and $2^{m-1}\pm
2^{m-h-1}$ for all $h$ in the range $d\leq h\leq t+1.$
\end{thm}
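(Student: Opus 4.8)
The plan is to mirror the proof template of the odd case (Theorem~\ref{thm: m odd sub-code family}): fix the length and dimension from the cyclic structure, convert the weight question into one about the ranks of the symplectic forms attached to the cosets of $\mathcal{R}(1,m)$, and then determine exactly which ranks occur. The length is $2^m$ and, since extension by an overall coordinate leaves the dimension unchanged, the dimension equals that of the generating cyclic code. As distinct primitive idempotents generate independent minimal ideals, that dimension is $|C_0|+|C_{-1}|+\sum_{j=d}^{t+1}|C_{-l_j}|$; using $|C_{-s}|=|C_s|$ and Lemma~\ref{lem: size of cyclotomic cosets} (so $|C_{l_j}|=m$ for $d\le j\le t$ and $|C_{l_{t+1}}|=m/2$), this equals $1+m+(t-d+1)m+m/2=m(t-d+2)+m/2+1$. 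This step is routine.

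Next I would describe the codewords analytically. Via the Mattson--Solomon correspondence (Theorem~\ref{thm: Mattson-Solomon polynomial coe}), a codeword of the extended code is the evaluation of a Boolean function whose quadratic part is $\sum_{j=d}^{t}T_m(a_jx^{1+2^j})+T_{m/2}(a_{t+1}x^{1+2^{t+1}})$, with $a_j\in GF(2^m)$ for $j\le t$ and $a_{t+1}\in GF(2^{m/2})$, while the linear and constant parts range over $\mathcal{R}(1,m)$. Thus the subcode is a union of cosets of $\mathcal{R}(1,m)$ in $\mathcal{R}(2,m)$, and by Lemma~\ref{lem: rank dis} the weight distribution of each coset is governed solely by the rank $2h$ of its symplectic matrix. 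It therefore suffices to prove that the ranks occurring are exactly $2h$ for $d\le h\le t+1$, since these produce precisely the weights $2^{m-1}$ and $2^{m-1}\pm 2^{m-h-1}$.

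For the lower bound on the rank I would compute the radical of the symplectic form as the kernel of a linearized map. Fixing $e\in GF(2^m)$ with $\mathrm{Tr}_{m\to m/2}(e)=1$, so that $T_{m/2}(\gamma)=T_m(e\gamma)$ on $GF(2^{m/2})$, the bilinear form can be written $B(x,y)=T_m\bigl(x\,L(y)\bigr)$ with $L(y)=\sum_{j=d}^{t}\bigl(a_jy^{2^j}+a_j^{2^{m-j}}y^{2^{m-j}}\bigr)+a_{t+1}y^{2^{t+1}}$. All nonzero exponents of $L$ lie between $2^d$ and $2^{m-d}$, so the Frobenius substitution $z=y^{2^d}$ turns $L$ into a linearized polynomial of $2$-degree at most $m-2d$; its kernel has dimension at most $m-2d$, whence $\mathrm{rank}(B)=m-\dim\ker L\ge 2d$. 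Hence every nonzero coset has weight at least $2^{m-1}-2^{m-d-1}$, the claimed minimum distance; the same computation with only the $j=t+1$ term present leaves $L(y)=a_{t+1}y^{2^{t+1}}$, which has trivial kernel and recovers full rank $m=2(t+1)$ for the bent case $d=t+1$.

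The \emph{hard part} will be to show that every intermediate rank $2h$, $d\le h\le t+1$, is actually attained, which also forces the minimum distance to be exactly $2^{m-1}-2^{m-d-1}$. Single Gold monomials help but do not suffice: $T_m(x^{1+2^j})$ has rank $2\bigl(t+1-\gcd(j,t+1)\bigr)$ and $T_{m/2}(a_{t+1}x^{1+2^{t+1}})$ is bent of full rank $m$, so these cover $h=t+1$ and some, but generally not all, intermediate values, and in particular the low rank $2d$ is usually unattainable by a single term. I would therefore study the $\mathbb{F}_2$-linear space $\mathcal{V}_d$ of symplectic matrices carried by the subcode, noting the nesting $\mathcal{V}_{t+1}\subset\cdots\subset\mathcal{V}_d$ in which the top piece is the additive group of full-rank symplectic matrices announced in the abstract, and show that adjoining the Gold components $j=d,\dots,t$ one at a time makes the next lower rank appear. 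Proving that each adjunction introduces rank $2h$ exactly---rather than skipping it---via explicit linear combinations $\sum_j T(a_jx^{1+2^j})$, or by reducing to the already-established odd case of Theorem~\ref{thm: m odd sub-code family}, is the delicate point and the main obstacle.
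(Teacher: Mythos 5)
Your proposal follows essentially the same route as the paper's proof: the dimension count via Lemma~\ref{lem: size of cyclotomic cosets}, the passage through the Mattson--Solomon polynomial to the trace form $S(\xi)$, the symplectic form $\mathcal{B}(\xi,\eta)=T_m(\xi L_B(\eta))$, and the rank bound $\mathrm{rank}\,B\ge 2d$ obtained by writing $L_B(\eta)=L_B'(\eta)^{2^d}$ with $\deg L_B'\le 2^{m-2d}$, which yields the lower bound on the minimum distance. The one point worth flagging is that the ``hard part'' you single out --- showing that every rank $2h$ with $d\le h\le t+1$ is actually attained, which is what makes the minimum distance exactly $2^{m-1}-2^{m-d-1}$ rather than merely at least that --- is not carried out in the paper either: after establishing the rank lower bound, the paper simply invokes Lemma~\ref{lem: rank dis} and asserts the stated weight set. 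So you have reproduced everything the paper actually proves and have correctly identified the step it leaves implicit; your nesting observation (it suffices to exhibit, for each $d$, one coset of rank exactly $2d$) is the right reduction, and your remark that the single Gold term $T_m(a_d x^{1+2^d})$ has rank $m-\gcd(2d,m)$, which need not equal $2d$ (e.g.\ $m=12$, $d=2$ gives rank $8$), shows that this remaining step is genuinely nontrivial and is not closed by the most obvious candidate.
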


\begin{proof} Let $m=2t+2.$ The general codeword of $\mathcal{R}(2,m)^*$
is
    {
    \begin{equation*}
        b\theta_0+a_0 x^{i_{0}}\theta_1^*+\sum_{j=1}^{t+1}a_j x^{i_j}\theta_{l_j}^*, \quad
        l_j=1+2^j,
    \end{equation*}
    }
where $b, a_0, a_j\in GF(2),$ $0\leq i_k\leq 2^m-2$ for $0\leq
k\leq t$ and $0\leq i_{t+1}\leq 2^{m/2}-2.$  Consider the case
$b=0.$ Its Mattson-Solomon polynomial is
    {
    \begin{equation*}
        \sum_{s\in C_1}(\gamma_0 z)^s+\sum_{j=1}^{t+1}\sum_{s\in
        C_{l_j}}(\gamma_j z)^s,
    \end{equation*}
    }
where $\gamma_0, \gamma_j\in GF(2^m).$ Due to Theorem \ref{thm:
Mattson-Solomon polynomial coe}, the corresponding Boolean
function is
    {
    \begin{eqnarray*}
        S(\xi)&=&\sum_{j=0}^{t+1}T_{|C_{l_j}|}(\gamma_j
        \xi)^{l_j} \quad \mbox{for all } \xi\in GF(2^m)^*,\\
        &\stackrel{(a)}{=}& \sum_{j=0}^{t}T_{m}(\gamma_j
        \xi)^{1+2^j}+T_{m/2}(\gamma_{t+1} \xi)^{1+2^{t+1}}.
    \end{eqnarray*}
    }
Here (a) is due to Lemma \ref{lem: size of cyclotomic cosets}.
The corresponding symplectic form is %
    {
    \begin{eqnarray*} 
        \mathcal{B}(\xi,\eta)&=& S(\xi+\eta)+S(\xi)+S(\eta) \\
                            &\stackrel{(b)}{=}&\sum_{j=1}^{t}T_{m}(\gamma_j^{1+2^j}(\xi \eta^{2^j}+\xi^{2^j}\eta))
                            +T_{m/2}(\gamma_{t+1}^{1+2^{t+1}}(\xi \eta^{2^{t+1}}+\xi^{2^{t+1}}\eta))\\
                            &\stackrel{(c)}{=}& T_{m}\{\sum_{j=1}^{t}(\gamma_j^{1+2^j}\xi
                            \eta^{2^j}+\gamma_j^{1+2^{2t+2-j}}\xi\eta^{2^{2t+2-j}})
                            +\gamma_{t+1}^{1+2^{t+1}}\xi \eta^{2^{t+1}}\}\\
                            &=& T_{m}(\xi L_B(\eta)),
    \end{eqnarray*}
    }
where $\xi, \eta\in GF(2^m)^*$ and
    {
    \begin{equation*}
        L_B(\eta)=\sum_{j=1}^{t}\gamma_j[(\gamma_j \eta)^{2^j}+(\gamma_j \eta)^{2^{2t+2-j}}]
        +\gamma_{t+1}(\gamma_{t+1}\eta)^{2^{t+1}}.
    \end{equation*}
    }
Note that (b) is from the fact that
    $T_m(\alpha+\beta)=T_m(\alpha)+T_m(\beta)$ and
    {
    \begin{equation*}
        (\xi+\eta)^{1+2^j}=(\xi+\eta)(\xi^{2^j}+\eta^{2^j})=\xi^{1+2^j}+\xi
        \eta^{2^j}+\xi^{2^j} \eta+\eta^{1+2^j};
    \end{equation*}
    }
(c) is due to the fact that
    {
    \begin{eqnarray*} 
        T_{m}(\gamma_j^{1+2^j}\xi^{2^j}\eta)&=&T_{m}(\gamma_j^{1+2^{2t+2-j}}\xi\eta^{2^{2t+2-j}}),\\
        T_{m}(\gamma_{t+1}^{1+2^{t+1}}\xi \eta^{2^{t+1}})
            &=&T_{m/2}(\gamma_{t+1}^{1+2^{t+1}}(\xi \eta^{2^{t+1}}+\xi^{2^{t+1}}\eta)).\\
    \end{eqnarray*}
    }
Let $1 \leq d\leq t+1$ and
$\gamma_1=\gamma_2=\cdots=\gamma_{d-1}=0.$ Then
    {
    \begin{eqnarray*} 
        L_B(\eta)&=&\sum_{j=1}^{t}\gamma_j[(\gamma_j \eta)^{2^j}+(\gamma_j \eta)^{2^{2t+2-j}}]
                        +\gamma_{t+1}(\gamma_{t+1}\eta)^{2^{t+1}}\\
                 &=& \gamma_{d} (\gamma_{d} \eta)^{2^d}+\gamma_{d+1} (\gamma_{d+1}\eta)^{2^{d+1}}+\cdots
                 +\gamma_{t} (\gamma_{t}\eta)^{2^{t}}
                 +\gamma_{t+1}(\gamma_{t+1}\eta)^{2^{t+1}}\\
                 &&+\gamma_{t}(\gamma_{t} \eta)^{2^{t+2}}
                 +\cdots+\gamma_{d+1}(\gamma_{d+1} \eta)^{2^{2t+1-d}}+\gamma_{d} (\gamma_{d} \eta)^{2^{2t+2-d}}\\
                 &=& L'_B(\eta)^{2^d},
    \end{eqnarray*}
    }
where degree $L'_B(\eta)\leq 2^{2t+2-2d}.$ Thus the dimension of
the space of $\eta$ for which $L_B(\eta)=0$ is at most $2t+2-2d.$
So rank $B\geq 2t+2-(2t+2-2d)=2d$ (refer to (20) of Ch. 15 in
\cite{6}). In particular, when $d=t+1,$ rank $B=2t+2$ and the
symplectic matrix $B$ is corresponding to a quadratic bent
function.

Note that by setting $\gamma_i=0$ we are removing the idempotent
$\theta_{l_i}^*$ from the code. Setting the first $d-1$
$\gamma_i$'s equal to 0, we derive a sub-code
$\mathcal{R}_{2t+2}^{d'},$ which has a corresponding symplectic
form of rank $\geq 2d.$
Clearly the code $\mathcal{R}_{2t+2}^{d'}$ has idempotent
    {
    \begin{equation*}
        \theta_1^*+\sum_{j=d}^{t+1}\theta_{l_{j}}^*, \quad l_{j}=1+2^{j}.
    \end{equation*}
    }
According to Lemma \ref{lem: size of cyclotomic cosets}, the code
has dimension $\sum_{j=d}^{t+1}|C_{l_j}|=(t+2-d)m+m/2.$ Due to
Lemma \ref{lem: rank dis}, the code has codewords of weights
$2^{m-1}$ and $2^{m-1}\pm 2^{m-h-1}$ for all $h$ in the range
$d\leq h\leq t+1.$ Adding the all-one codeword into
$\mathcal{R}_{2t+2}^{d'},$ we get a sub-code
$\mathcal{R}_{2t+2}^{d^*}$ of $\mathcal{R}(2,m)^*$ having
idempotent
    {
    \begin{equation*}
        \theta_0+\theta_1^*+\sum_{j=d}^{t+1}\theta_{l_{j}}^*, \quad
        l_{j}=1+2^{j},
    \end{equation*}
    }
of dimension $(t+2-d)m+m/2+1$ and minimum distance
$2^{m-1}-2^{m-d-1}-1.$ Adding a parity check bit, we get the
extended code $\mathcal{R}_{2t+2}^{d}.$
\end{proof}

\begin{cor}
    {
    \begin{equation*}
        \mathcal{R}_{2t+2}^{t+1}\subset \mathcal{R}_{2t+2}^{t} \subset \cdots \subset \mathcal{R}_{2t+2}^1.
    \end{equation*}
    }
\end{cor}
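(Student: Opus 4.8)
The plan is to read the inclusions directly off the generating idempotents, exploiting the fact that a cyclic code is the internal direct sum of the minimal ideals indexed by the primitive idempotents occurring in its idempotent. Recall from the proof of Theorem \ref{thm: m even sub-code family} that the punctured code $\mathcal{R}_{2t+2}^{d^*}\subset\mathcal{R}(2,m)^*$ has idempotent $e_d=\theta_0+\theta_1^*+\sum_{j=d}^{t+1}\theta_{l_j}^*$, and that $\mathcal{R}_{2t+2}^{d}$ is obtained from $\mathcal{R}_{2t+2}^{d^*}$ by adjoining an overall parity-check bit.

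First I would record the structural facts needed. By Lemma \ref{lem: size of cyclotomic cosets} the cosets $C_{l_j}$ for $d\le j\le t+1$, together with $C_0$ and $C_{-1}$, are pairwise distinct, so the primitive idempotents in $e_d$ are pairwise orthogonal ($\theta_s\theta_{s'}=0$ for $s,s'$ in different cosets, $\theta_s^2=\theta_s$). Hence the cyclic code generated by a sum of such idempotents is the direct sum of the corresponding minimal ideals, with dimension equal to the sum of the coset sizes.

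Next I would compare consecutive idempotents. For $1\le d\le t$ one has $e_d=e_{d+1}+\theta_{l_d}^*$ in $R_n=GF(2)[x]/(x^n-1)$, and $\theta_{l_d}^*$ is orthogonal to every idempotent appearing in $e_{d+1}$. Writing the generated ideal as $eR_n$, orthogonality yields $e_dR_n=e_{d+1}R_n\oplus\theta_{l_d}^*R_n$, so $\mathcal{R}_{2t+2}^{(d+1)^*}\subseteq\mathcal{R}_{2t+2}^{d^*}$; the inclusion is proper because $|C_{l_d}|=m>0$ adds exactly $m$ to the dimension, matching the count $m(t-d+2)+m/2$ recorded in the theorem.

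Finally I would transport the chain through the extension. Adjoining the overall parity check is an injective linear map $\varphi$ on $GF(2)^{2^m-1}$, so $\mathcal{R}_{2t+2}^{(d+1)^*}\subseteq\mathcal{R}_{2t+2}^{d^*}$ gives $\varphi(\mathcal{R}_{2t+2}^{(d+1)^*})\subseteq\varphi(\mathcal{R}_{2t+2}^{d^*})$, that is $\mathcal{R}_{2t+2}^{t+1}\subset\mathcal{R}_{2t+2}^{t}\subset\cdots\subset\mathcal{R}_{2t+2}^1$, with strictness preserved by injectivity. The only step needing care is the orthogonality claim: I must confirm that the distinct-coset conclusion of Lemma \ref{lem: size of cyclotomic cosets} really does cover every pair of primitive idempotents in play, so that the direct-sum decomposition — and hence each inclusion — is legitimate; everything after that is formal.
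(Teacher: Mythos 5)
Your proposal is correct and follows essentially the same route as the paper: the paper likewise derives the chain from the nesting of the punctured codes $\mathcal{R}_{2t+2}^{(t+1)^*}\subset\cdots\subset\mathcal{R}_{2t+2}^{1^*}$, read off from their idempotents, and then passes to the extended codes. You merely make explicit the orthogonal-idempotent/minimal-ideal justification (note that the pairwise distinctness of the cosets $C_{l_j}$ is a standard fact used implicitly in the dimension count of Theorem \ref{thm: m even sub-code family}, not literally the content of Lemma \ref{lem: size of cyclotomic cosets}), which the paper leaves implicit.
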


\begin{proof}
Recall that $\mathcal{R}_{2t+2}^d$ for $1\leq d\leq t+1$ is by
extending the sub-code of $\mathcal{R}(2,m)^*,$
$\mathcal{R}_{2t+2}^{d^*},$ which has idempotent
    {
    \begin{equation*}
        \theta_0+\theta_1^*+\sum_{j=d}^{t+1}\theta_{l_{j}}^*, \quad l_{j}=1+2^{j}.
    \end{equation*}
    }
The corollary follows directly from the fact that
    {
    \begin{equation*}
      \mathcal{R}_{2t+2}^{{t+1}^*}\subset \mathcal{R}_{2t+2}^{t^*} \subset \cdots \subset
      \mathcal{R}_{2t+2}^{1^*}.
    \end{equation*}
    }\end{proof}

Clearly the sub-codes in the sub-code family for even $m$ satisfy
the nested structure. By a similar proof, the sub-codes in the
sub-code family for odd $m$ by Theorem \ref{thm: m odd sub-code
family} have the same property.
\begin{cor}
    {
    \begin{align*}
        &\mathcal{R}_{1,2t+1}^{t}\subset \mathcal{R}_{1,2t+1}^{t-1} \subset \cdots \subset
        \mathcal{R}_{1,2t+1}^1; \\
        &\mathcal{R}_{2,2t+1}^{t}\subset \mathcal{R}_{2,2t+1}^{t-1} \subset \cdots \subset \mathcal{R}_{2,2t+1}^1.
    \end{align*}
    }
\end{cor}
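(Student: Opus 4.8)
The plan is to mirror the proof of the preceding corollary for even $m$, reducing the nested structure of the extended codes to the nested structure of the underlying cyclic sub-codes of $\mathcal{R}(2,m)^*$, which in turn follows from inclusion of their idempotent supports. First I would recall from Theorem~\ref{thm: m odd sub-code family} that $\mathcal{R}_{1,2t+1}^{d}$ is the extension of the cyclic sub-code $\mathcal{R}_{1,2t+1}^{d^*}$ of $\mathcal{R}(2,m)^*$ with idempotent $\theta_0+\theta_1^*+\sum_{j=d}^{t}\theta_{l_j}^*$, and that $\mathcal{R}_{2,2t+1}^{d}$ is the extension of $\mathcal{R}_{2,2t+1}^{d^*}$ with idempotent $\theta_0+\theta_1^*+\sum_{j=1}^{t-d+1}\theta_{l_j}^*$, where $l_j=1+2^j$.

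The key observation is that each of these idempotents is a sum of distinct primitive idempotents, so the cyclic code it generates is the direct sum of the corresponding minimal ideals; consequently, containment of two such codes is equivalent to inclusion of their sets of constituent primitive idempotents. For the first family, passing from $d$ to $d-1$ adjoins the single extra primitive idempotent $\theta_{l_{d-1}}^*$ to the support, so $\mathcal{R}_{1,2t+1}^{d^*}\subset\mathcal{R}_{1,2t+1}^{{d-1}^*}$; similarly, for the second family, replacing the index range $1\le j\le t-d+1$ by $1\le j\le t-d+2$ adjoins $\theta_{l_{t-d+2}}^*$, giving $\mathcal{R}_{2,2t+1}^{d^*}\subset\mathcal{R}_{2,2t+1}^{{d-1}^*}$. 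Iterating these inclusions yields the two chains
\[
  \mathcal{R}_{1,2t+1}^{t^*}\subset\cdots\subset\mathcal{R}_{1,2t+1}^{1^*},\qquad
  \mathcal{R}_{2,2t+1}^{t^*}\subset\cdots\subset\mathcal{R}_{2,2t+1}^{1^*}.
\]

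Finally, since extending each cyclic code (by adjoining the all-one codeword and a parity-check bit, exactly as in the proof of Theorem~\ref{thm: m even sub-code family}) is a monotone operation performed identically on every code in the chain, the inclusions are preserved under extension, and the asserted chains for $\mathcal{R}_{1,2t+1}^{d}$ and $\mathcal{R}_{2,2t+1}^{d}$ follow at once. I do not anticipate a genuine obstacle here; the only point requiring care is bookkeeping the two different parameterizations of the index range in the two families, so that the direction of each inclusion correctly matches the decreasing superscript $d$ in the statement.
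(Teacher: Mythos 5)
Your proposal is correct and follows essentially the same route as the paper, which simply remarks that the odd-$m$ chains follow ``by a similar proof'' to the preceding corollary, i.e.\ by reducing the inclusions to the containment of the idempotent supports of the underlying cyclic sub-codes of $\mathcal{R}(2,m)^*$ and noting that extension preserves inclusion. Your explicit bookkeeping of the two index ranges ($j$ from $d$ to $t$ versus $j$ from $1$ to $t-d+1$) is the only detail the paper leaves implicit, and you handle it correctly.
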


We say that a linear code is {\it minimum distance optimal} if it
achieves the largest minimum distance for given length and
dimension. If the binary code contains the all-one sequence, then
we say that it is {\it self-complementary}.

\begin{thm}\label{thm: even m dim 3m/2+1 distance optimal}
If a linear $[2^m, 1+3m/2, 2^{m-1}-2^{m/2-1}]$ code  is
self-complementary, then it is minimum distance optimal. Here $m$
is even.
\end{thm}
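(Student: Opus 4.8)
The plan is to establish optimality as an upper bound: I will show that no linear $[2^m,1+3m/2]$ code can have minimum distance exceeding $2^{m-1}-2^{m/2-1}$. The decisive tool is the Grey--Rankin bound, which asserts that a binary code $C$ of length $n$ all of whose pairwise distances lie in $[d,n-d]$ satisfies $|C|\le 8d(n-d)/(n-(n-2d)^2)$ whenever $(n-2d)^2<n$. A self-complementary linear code automatically meets this hypothesis: it contains the all-one word, so for every nonzero $c$ the word $\mathbf 1+c$ has weight $n-\mathrm{wt}(c)\ge d$, forcing $\mathrm{wt}(c)\in[d,n-d]$ and hence every difference into $[d,n-d]$. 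This is exactly where the self-complementary assumption enters. I would then argue by contradiction, assuming a linear $[2^m,1+3m/2]$ code $C'$ with minimum distance $d'\ge 2^{m-1}-2^{m/2-1}+1$.

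First I would dispose of the favourable case in which every codeword of $C'$ has weight at most $n-d'$ (in particular the self-complementary case). Then all pairwise distances lie in $[d',n-d']$ and Grey--Rankin applies, and here the arithmetic is clean and forms the heart of the proof. From $d'\ge 2^{m-1}-2^{m/2-1}+1$ one gets $n-2d'\le 2^{m/2}-2$, so the denominator satisfies $n-(n-2d')^2\ge 4(2^{m/2}-1)>0$, while the numerator obeys $8d'(n-d')\le 8(2^{m-1})^2=2^{2m+1}$ since $d'\le n/2$. Hence $|C'|\le 2^{2m-1}/(2^{m/2}-1)<2^{3m/2}$ for even $m>2$, contradicting $|C'|=2^{1+3m/2}$. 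Note the conceptual point: for the extremal code itself $d=2^{m-1}-2^{m/2-1}$ gives $(n-2d)^2=n$, so the bound is vacuous, and it is precisely the gain of a single unit in $d$ that makes the denominator positive and the inequality bite.

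The remaining, and genuinely harder, case is a non-self-complementary $C'$ possessing a codeword $c_0$ of weight $W>n-d'=2^{m-1}+2^{m/2-1}-1$. One natural reduction is to project $C'$ onto the support $S$ of $c_0$: since $|S^{c}|=n-W<d'$, no nonzero word is supported off $S$, the projection is injective, and its image is a $[W,1+3m/2]$ code containing the all-one word $c_0|_S$, hence self-complementary, with minimum distance $W-(n-d')$. Re-applying Grey--Rankin to this shorter code settles the extreme $W=n-1$, where its relative minimum distance is still just large enough. I expect the real obstacle to be the bulk of the high-weight range $n-d'<W\le n-2$: there the projected code's minimum distance is too small relative to its length, $\bigl(2(n-d')-W\bigr)^2$ is no longer below $W$, and the Grey--Rankin denominator degenerates. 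To close this gap I would bring in the first two power moments of $C'$, which fix the mean weight at $n/2$ and, once the dual distance is at least $3$, the variance at $n/4$, in order to limit how far and how often the weights may stray above $n-d'$; should a fully uniform argument prove elusive, the finitely many residual lengths can be verified against the known distance tables, matching the length-$\le 256$ computations announced in the abstract.
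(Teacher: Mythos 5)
Your first case is, in substance, the paper's entire proof: the paper also argues by contradiction via the Grey--Rankin bound, computing
$8d(n-d)/(n-(n-2d)^2)=2+2^{m-1}(2^m-1)/(\delta(2^{m/2}-\delta))$ for $d=2^{m-1}-2^{m/2-1}+\delta$, bounding $\delta(2^{m/2}-\delta)\ge 2^{m/2}-1$, and concluding $|C|\le 2^{3m/2-1}+2^{m-1}+2<2^{1+3m/2}$; your cruder estimates (numerator $\le 2n^2$, denominator $\ge 4(2^{m/2}-1)$) reach the same contradiction for even $m>2$. Two small repairs are needed there. First, for a self-complementary code the pairwise distances do \emph{not} all lie in $[d,n-d]$: complementary pairs sit at distance $n$ (take $c=\mathbf{1}$ in your own argument). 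You should therefore invoke the Grey--Rankin bound in its self-complementary form, as the paper does, rather than the ``all distances in $[d,n-d]$'' form. Second, your denominator estimate $(n-2d')^2\le(2^{m/2}-2)^2$ tacitly needs $d'\le 2^{m-1}+2^{m/2-1}-1$; this is easily secured by Plotkin, but should be said (the paper likewise restricts $0<\delta<2^{m/2-1}$ without comment).

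The genuine divergence is your second case, and it is where the gap lies. The paper never confronts non-self-complementary competitors: it applies the self-complementary Grey--Rankin bound to the hypothetical code of larger minimum distance, i.e.\ it implicitly compares the given code only against other self-complementary codes of the same length and size. You have set yourself the stronger task of ruling out \emph{every} linear $[2^m,1+3m/2]$ code of larger minimum distance, and your treatment of the resulting hard case --- a competitor with a codeword of weight above $n-d'$ --- is not a proof: the projection onto the support of $c_0$ only closes the extreme $W=n-1$, you yourself note that the Grey--Rankin denominator degenerates for $W\le n-2$, and the appeal to power moments and to distance tables is a plan rather than an argument. So as a proof of the literal claim ``optimal among all linear codes of this length and dimension'' your proposal is incomplete; as a proof of what the paper's own argument actually establishes it is complete and correct, and it is to your credit that you isolated exactly the case the paper silently omits.
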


\begin{proof} Let $C$ be a self-complementary linear $[2^m, 1+3m/2,
2^{m-1}-2^{m/2-1}]$ code. If $C$ contains $\mathcal{R}(1,m),$ due
to Theorem 6 in Ch.14 in \cite{6}, it is clear that $C$ is minimum
distance optimal. Here we prove its is already true if $C$
contains the all-one codeword, i.e., $C$ is self-complementary.

Suppose that the minimum distance $d$ for given $n=2^m, k=1+3m/2$
can be larger, i.e., $d=2^{m-1}-2^{m/2-1}+\delta,$ where $\delta$
is a positive integer and $0<\delta<2^{m/2-1}.$ Due to the
Grey-Rankin bound (refer to (46) of Ch. 17 in \cite{6}) for the
binary self-complementary code, we have
    {
    \begin{eqnarray*}
        |C|&\leq& \frac{8d(n-d)}{n-(n-2d)^2}\\
            &=& \frac{8(2^{m-1}-2^{m/2-1}+\delta)(2^{m-1}+2^{m/2-1}-\delta)}{2^m-(2^{m/2}-2\delta)^2}\\
            &=& 2+\frac{2^{m-1}(2^m-1)}{\delta(2^{m/2}-\delta)}\\
            &\stackrel{(a)}{\leq}& 2^{3m/2-1}+2^{m-1}+2,
    \end{eqnarray*}
    }
where (a) follows from the fact that $\delta(2^{m/2}-\delta)\geq
2^{m/2}-1.$ However, $|C|=2^{1+3m/2}>2^{3m/2-1}+2^{m-1}+2,$ which
is impossible for even $m>0.$ Thus the self-complementary linear
$[2^m, 1+3m/2, 2^{m-1}-2^{m/2-1}]$ code has optimal minimum
distance.
\end{proof}
    {
         \begin{table}[ht]
            \caption{Some sub-codes of $\mathcal{R}(2,m)$} \label{table: subcodes m even}
            {\begin{center}
                            {
                \begin{tabular}{   l             | c | c | c | c | c | c    |c |c |c|c}
                        \hline
                                        $m$      & 4 & 6 & 6  & 8  & 8 & 8  &3 &5 &7 & 7\\
                                     length      &16 &64 &64  &256 &256&256 &8 &32&128&128\\
                                   dimension     & 7 &10 & 16 & 13 &21 & 29 &7 &11&15&22\\
                        \hline
                                     $d_{-}$ & 6 & 28 & 24 &120&112& 96 &2  &12&56& 48\\
                                minimum distance & 6 & 28 & 24 &120&112& 96 &2  &12&56& 48\\
                                     $d_{+}$ & 6 & 28 & 24 &122&116&111 &2  &12&56& 52\\
                        \hline
                \end{tabular}
                        }
            \end{center}
           }
        \end{table}
     }
{\it Remark:} In Table \ref{table: subcodes m even}, we show the minimum distance of the
sub-codes of $\mathcal{R}(2,m)$ constructed by Theorem \ref{thm: m even
sub-code family} and Theorem \ref{thm: m odd sub-code family} for $m\leq
8.$ Note that $d_{+}$ is the upper bound of the minimum distance for all the linear codes of the same length and dimension; $d_{-}$ is the
largest minimum distance, of which a linear code with the same length and dimension has been discovered so far (refer to
\cite{2}). From Table \ref{table: subcodes m even}, we see that although
all these sub-codes contain of $\mathcal{R}(1,m)$ and have specified
weight sets, they have minimum distances reaching the
upper bound $d_{+}$ or achieving the largest known minimum distance $d_{-}$ of which a linear code with the same length and dimension can be constructed (maybe not a sup-code of $\mathcal{R}(1,m)$ or a sub-code of $\mathcal{R}(2,m)$).
To some extend, we can say that the $\mathcal{R}(2,m)$ has good sub-codes that can be constructed by
Theorem \ref{thm: m even sub-code family} and Theorem \ref{thm: m odd
sub-code family}.

Compare the linear code $\mathcal{R}_{2t+2}^{t+1}$ to the Kerdock
code $\mathcal{K}(m).$  We consider $\mathcal{R}_{2t+2}^{t+1}$ as
$\mathcal{R}(1,m)$ together with $2^{m/2}-1$ cosets of
$\mathcal{R}(1,m),$ and $\mathcal{K}(m)$ as $\mathcal{R}(1,m)$
together with $2^{m-1}-1$ cosets of $\mathcal{R}(1,m).$ Note that
every coset is corresponding to a quadratic bent function and therefore
associated to a symplectic matrix of full rank. It is well known
that the cosets of $\mathcal{K}(m)$ are corresponding to the
maximal set of symplectic forms with the property that the rank of
the sum of any two in the set is still full rank. Clearly
$\mathcal{K}(m)$ has much more codewords. However,
$\mathcal{R}_{2t+2}^{t+1}$ enjoys a linear structure. One can
correspondingly obtain a set of $2^{m/2}-1$ symplectic matrices of
full rank, denoted as $\mathcal{G}^*.$ Introducing a matrix with
all zero elements into $\mathcal{G}^*,$ we get a set
$\mathcal{G}.$ Due to the linearity of $\mathcal{R}_{2t+2}^{t+1},$
$\mathcal{G}$ is a commutative group with respect to the addition
operation.
\begin{thm}
    For any even number $m,$ there exists a group $\mathcal{G}$ of $m\times m$ symplectic matrices
    with respect to the addition operation.
    There are $2^{m/2}$ symplectic matrices in $\mathcal{G}.$ In particular, all the
    matrices except the matrix with all zero elements have full
    rank $m.$
\end{thm}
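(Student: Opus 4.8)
The goal is to show that $\mathcal{G}$, the set of $2^{m/2}$ symplectic matrices obtained from the cosets of $\mathcal{R}(1,m)$ inside $\mathcal{R}_{2t+2}^{t+1}$ together with the zero matrix, forms an additive commutative group in which every nonzero element has full rank $m$. My plan is to read off the structure of $\mathcal{G}$ directly from the proof of Theorem~\ref{thm: m even sub-code family} and then verify the three required properties in turn: closure under addition, commutativity, and the full-rank condition on nonzero elements.

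First I would set up the correspondence explicitly. In the proof of Theorem~\ref{thm: m even sub-code family}, the case $d=t+1$ singles out the sub-code $\mathcal{R}_{2t+2}^{t+1}$, whose nonzero symplectic forms all have rank $2t+2=m$ and hence correspond to quadratic bent functions. Each coset of $\mathcal{R}(1,m)$ in this code is characterized by its symplectic matrix $B$, equivalently by the symplectic form $\mathcal{B}(\xi,\eta)=T_m(\xi\,L_B(\eta))$ with $L_B(\eta)=\gamma_{t+1}(\gamma_{t+1}\eta)^{2^{t+1}}$ when only the top idempotent $\theta_{l_{t+1}}^*$ is retained. The key structural input is linearity: because $\mathcal{R}_{2t+2}^{t+1}$ is a \emph{linear} code, the sum of two codewords is again a codeword, so the symplectic form of the sum of two cosets is the sum of their symplectic forms. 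I would make this precise by recalling that the map sending a coset to its symplectic matrix $B=Q+Q^{\mathrm{T}}$ is additive: if $S_1,S_2$ have matrices $B_1,B_2$, then $S_1+S_2$ has matrix $B_1+B_2$, since the quadratic-form part adds. This immediately gives both closure (the sum lands in $\mathcal{G}$ because $\mathcal{R}_{2t+2}^{t+1}$ is closed under addition and the zero matrix corresponds to the coset $\mathcal{R}(1,m)$ itself) and commutativity (matrix addition over $GF(2)$ is commutative), and each element is its own inverse in characteristic $2$. Thus $\mathcal{G}$ is an elementary abelian $2$-group of order $2^{m/2}$.

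The substantive step is the full-rank claim: every nonzero $B\in\mathcal{G}$ must have rank exactly $m$. This does \emph{not} follow from closure alone, since a sum of two full-rank symplectic matrices need not have full rank in general. Here I would invoke the rank computation already carried out in the proof of Theorem~\ref{thm: m even sub-code family}: every nonzero coset in $\mathcal{R}_{2t+2}^{t+1}$ arises from a choice of parameters with $\gamma_1=\cdots=\gamma_t=0$ and $\gamma_{t+1}\neq 0$, giving $L_B(\eta)=\gamma_{t+1}(\gamma_{t+1}\eta)^{2^{t+1}}$, a linearized polynomial whose only root is $\eta=0$, so the kernel is trivial and rank $B=m$. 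Since $\mathcal{G}\setminus\{0\}$ consists precisely of the symplectic matrices of these nonzero cosets, every nonzero element of $\mathcal{G}$ has full rank $m$. The main obstacle, then, is simply confirming that the additive-closure step does not silently produce a new symplectic form outside the parametrized family; I would resolve this by noting that the group is generated within the single top idempotent component, so the sum of two such forms again has the form $T_m(\xi\,\gamma(\gamma\eta)^{2^{t+1}})$ for some $\gamma\in GF(2^{m/2})$, keeping us inside the bent-function family and hence at full rank.

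Assembling these pieces yields the theorem: $\mathcal{G}$ is a commutative additive group of $m\times m$ symplectic matrices of cardinality $2^{m/2}$, with every nonzero element of full rank $m$, for every even $m$.
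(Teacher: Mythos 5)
Your proposal is correct and follows essentially the same route as the paper, which states this theorem without a formal proof and relies on the preceding discussion: the $2^{m/2}$ cosets of $\mathcal{R}(1,m)$ in the linear code $\mathcal{R}_{2t+2}^{t+1}$ give the additive group structure, and the rank computation via $L_B(\eta)=\gamma_{t+1}(\gamma_{t+1}\eta)^{2^{t+1}}$ from the proof of the sub-code theorem gives full rank. You are in fact slightly more careful than the paper on the one delicate point --- that a sum of two full-rank symplectic matrices stays full rank because it remains inside the parametrized family (the coefficients $\gamma^{1+2^{t+1}}$ being norms form the additively closed set $GF(2^{m/2})$) --- which is a welcome addition rather than a deviation.
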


\section{Conclusion} \label{sec: conclusion}
In this paper, we consider the first order and the second order
Reed-Muller codes. Our main contributions are twofold. First, we
prove the uniqueness of the first order Reed-Muller code.
Secondly, we give a linear sub-code family of the second order
Reed-Muller code $\mathcal{R}(2,m)$ for even $m,$ which is an
extension of Corollary 17 of Ch. 15 in \cite{6}. We also show that
for $m\leq 8,$ these specified sub-codes have good minimum
distance equal to the upper bound or the largest constructive minimum distance
for linear codes of the same length and dimension. As an
additional result, we obtain an additive commutative group of
$m\times m$ symplectic matrices of full rank with respect to the
addition operation, which is new to our knowledge.



\end{document}